\definecolor{webgreen}{rgb}{0,0.4,0}
\definecolor{webbrown}{rgb}{0.6,0,0}
\definecolor{purple}{rgb}{0.5,0,0.25}
\definecolor{darkblue}{rgb}{0,0,0.7}
\definecolor{darkred}{rgb}{0.7,0,0}
\definecolor{darkgreen}{rgb}{0,0.7,0}
\newcommand{\ignore}[1]{}
\newtheorem{lemma}{{\sc Lemma}}
\newtheorem{prop}{{\sc Proposition}}
\newtheorem{cor}{{\sc Corollary}}
\newtheorem{theorem}{{\sc Theorem}}
\newtheorem{defn}{{\sc Definition}}
\newenvironment{proof}{\noindent {\bf \sl Proof\/}:\enspace}
{\hfill $\blacksquare{}$ \vspace{12pt}}
\begin{document}

\title{{\bf {\Large Ordinal Bayesian incentive compatibility \\
in random assignment model}}~\thanks{We are grateful to Sven Seuken, Timo Mennle, Arunava Sen, Dipjyoti Majumdar, Souvik Roy, and Wonki Cho for
their comments.}}
\date{\today}
\author{Sulagna Dasgupta and Debasis Mishra~\thanks{Dasgupta: University of Chicago,~\texttt{sulagna@uchicago.edu}; Mishra: Indian Statistical Institute, Delhi, \texttt{dmishra@isid.ac.in}}}

\maketitle

\begin{abstract}

We explore the consequences of weakening the
notion of incentive compatibility from strategy-proofness to ordinal Bayesian incentive compatibility (OBIC)
in the random assignment model. If the common prior of the agents
is the {\em uniform prior}, then a large class of random mechanisms are OBIC with respect
to this prior -- this includes the probabilistic serial mechanism.
We then introduce a robust version of OBIC: a mechanism is {\em locally robust OBIC} if
it is OBIC with respect {\em all}
independent and identical priors in some neighborhood of a given independent and identical prior.
We show that every locally robust OBIC mechanism
satisfying a mild property called {\em elementary monotonicity} is strategy-proof. This leads to a
strengthening of the impossibility result in \citet{Bogo01}: if there are at least four agents, there is
no locally robust OBIC and ordinally efficient mechanism satisfying equal treatment of equals. \\

\noindent {\sc Keywords.} ordinal Bayesian incentive compatibility, random assignment, probabilistic serial
mechanism. \\

\noindent {\sc JEL Code.} D47, D82
\end{abstract}

\newpage

\section{Introduction}

This paper explores the consequences of weakening incentive compatibility from strategy-proofness to {\em ordinal Bayesian incentive compatibility} in
the random assignment model (one-sided matching model). Ordinal Bayesian incentive compatibility (OBIC) requires
that the truth-telling {\em expected share vector} of an agent first-order stochastically dominates
the expected share vector from reporting any other preference. It is the natural analogue of Bayesian incentive compatibility
in an ordinal mechanism. This weakening of
strategy-proofness was proposed by \citet{dAsp88}.
We study OBIC by considering mechanisms that allow for randomization in the assignment model.

In the random assignment model, the set of mechanisms satisfying ex-post efficiency and
strategy-proofness is quite rich.\footnote{\citet{Pycia17} characterize the
set of deterministic, strategy-proof, Pareto efficient, and non-bossy mechanisms
in this model. This includes generalizations of the top-trading-cycle
mechanism.} Despite satisfying such strong incentive properties,
all of them either fail to satisfy {\em equal treatment of equals}, a weak notion of fairness, or
{\em ordinal efficiency}.
Indeed, \citet{Bogo01} propose a new mechanism, called the {\em probabilistic
serial mechanism}, which satisfies equal treatment of equals and ordinal efficiency.
However, they show that it fails strategy-proofness, and no mechanism can satisfy all these
three properties simultaneously if there are at least four agents.
A primary motivation for weakening the notion of incentive compatibility to OBIC
is to investigate if we can escape this impossibility result.

We show two types of results.
First, if the (common) prior is a uniform probability distribution over the set of possible
preferences, then every {\em neutral} mechanism satisfying a mild property called {\em elementary
monotonicity} is OBIC.\footnote{Neutrality is a standard axiom in social choice theory
which requires that objects are treated symmetrically. Elementary monotonicity is
a monotonicity requirement of a mechanism. We define it formally in Section \ref{sec:uniform}.} An example of such a mechanism is the probabilistic serial mechanism.
This is a positive result and provides a strategic foundation for the probabilistic
serial mechanism. In particular, it shows that there exist ordinally efficient mechanisms
satisfying equal treatment of equals which are OBIC with respect to the uniform prior.

Second, we explore the implications of strengthening OBIC as follows. A mechanism is {\em locally robust OBIC} (LROBIC) with
respect to an independent and identical prior if it is OBIC with respect to every independent and identical prior in
its ``neighborhood". The motivation for such requirement of robustness in the
mechanism design literature is now well-known, and referred to as the Wilson doctrine \citep{Wilson87}.
We show that every LROBIC mechanism satisfying elementary monotonicity is
strategy-proof. An immediate corollary of this result is that the probabilistic serial mechanism
is not LROBIC (though it is OBIC with respect to the uniform prior). As a corollary, we can show that
when there are at least four agents, there is no LROBIC and {\em ordinally efficient} mechanism satisfying
equal treatment of equals. This strengthens the seminal impossibility result of \citet{Bogo01} by
replacing strategy-proofness with LROBIC.

Both our results point to very different implications of OBIC in the presence of
elementary monotonicity -- if the prior is uniform, this notion of incentive compatibility
is very permissive; but if we require OBIC with respect to a set of independent and identical priors in any
neighborhood of a given prior, this notion of incentive compatibility is very restrictive.

\subsection{Related literature}

There is fairly large literature on random assignment problems. We
summarize them below.

The notion of incentive compatibility that we use, OBIC, has been used
in voting models by \citet{Majumdar04,Bhargava15,Mishra16,Kim18} to escape the
dictatorship result in \citep{Gibbard73,Satt75,Gibbard77}. All these papers use deterministic mechanisms
in voting models, whereas we apply OBIC to the random assignment model.
\citet{Majumdar04} show that every deterministic
neutral voting mechanism satisfying elementary monotonicity
is OBIC with respect to uniform priors. Our Theorem \ref{theo:uniform} shows
that this result generalizes to the random assignment model.
\citet{Mishra16} generalizes this result to some restricted domains of voting (like the single peaked domain).
He shows that in the deterministic voting model, elementary monotonicity and
OBIC with respect to ``generic" prior is equivalent to strategy-proofness in
a variety of restricted domains -- see also \citet{Kim18} for a strengthening
of this result. Though these results are similar to our Theorem \ref{theo:lobic},
there are significant differences. First, we consider randomization while these
results are only for deterministic mechanisms. Our notion of locally robust OBIC
is incomparable to OBIC with respect to generic priors used in these papers.
Second, ours is a model of private good allocation (random assignment), while
these papers deal with the voting model.

\citet{Bogo01}
introduce a family of mechanisms in the random assignment model.
They call these the simultaneous eating algorithms. which
generate {\sl ordinally} efficient random assignments, a stronger notion of efficiency than ex-post efficiency.
\footnote{\citet{Katta06} extend
the simultaneous eating algorithm to allow for ties in preferences.}
The probabilistic serial mechanism belongs to this family and it is anonymous.
However, it is not strategy-proof. In fact, \citet{Bogo01} show
that there is no ordinally efficient and strategy-proof mechanism satisfying
equal treatment of equals when there are at least four agents.\footnote{With three
agent, the random priority mechanism satisfies these properties.}

There is a large literature that provides strategic foundations
to the probabilistic serial (PS) mechanism. \citet{Bogo01} show that
the PS mechanism satisfies {\em weak}-strategy-proofness. Their notion
of weak strategy-proofness requires that the manipulation
share vector {\em cannot} first-order-stochastic-dominate the truth-telling share vector.
\citet{Bogo02} study a problem where agents have an {\em outside option}.
When agents have the same ordinal ranking over objects but the position of
outside option in the ranking of objects is the only private information, they show that the PS mechanism is
strategy-proof. Other contributions in this direction include \citet{Liu19,Liu19b}, who identify
domains where the probabilistic serial mechanism is strategy-proof.
\citet{Che10} show that the PS mechanism and the random priority mechanism (which
is strategy-proof) are asymptotically equivalent. Similarly, \citet{Kojima10} show
that when sufficiently many copies of an object are present, then the PS mechanism
is strategy-proof. Thus, in large economies, the PS mechanism is strategy-proof.
\cite{Balbu16} introduce a notion of strategy-proofness which is stronger than
weak strategy-proofness and show that the PS mechanism satisfies it.
His notion of strategy-proofness is based on the ``convex" domination of lotteries, and
hence, called {\em convex strategy-proofness}. \citet{Me21}
define a notion called {\em partial strategy-proofness}, which is weaker
than strategy-proofness and show that the PS mechanism satisfies it.
They show that strategy-proofness is equivalent to upper invariant, lower
invariant and elementary monotonicity (they call it swap monotonicity).
Their notion of partial strategy-proofness is equivalent to upper invariance
and elementary monotonicity, and hence, it is weaker than strategy-proofness.

The main difference between
these weakenings of strategy-proofness and ours is that OBIC is a {\em prior-based}
notion of incentive compatibility. It is the natural analogue of Bayesian
incentive compatibility in an ordinal environment. \citet{Ehlers07} study
OBIC in a {\em two-sided matching} problem. Their main focus is on
OBIC mechanism that select a stable matching. They characterize the beliefs for which such a
mechanism exists. There is a literature in computer science studying computational aspects of manipulation
of the PS rule -- see \citet{Az14,Az15} and references therein.

\section{Model}

{\bf Assignments.} There are $n$ agents and $n$ objects.\footnote{All our results
extend even if the number of objects is not the same as the number of agents. We assume this
only to compare our results with the random assignment literature, where this assumption
is common.}
Let $N:=\{1,\ldots,n\}$ be the set of
agents and $A$ be the set of objects.
We define the notion of a feasible assignment
first.

\begin{defn}
An $n \times n$ matrix $L$ is an {\bf assignment}
if
\begin{align*}
L_{ia} &\in [0,1]~\qquad~\forall~i \in N,~\forall~a \in A \\
\sum_{a \in A}L_{ia} &= 1 ~\qquad~\forall~i \in N\\
\sum_{i \in N}L_{ia} &= 1~\qquad~\forall~a \in A
\end{align*}
\end{defn}
Hence, an assignment is a bistochastic matrix.
For any assignment $L$, we write $L_i$ as the {\bf share vector} of
agent $i$.\footnote{Whenever we say an assignment, we mean a random assignment from now on.} Formally, a share vector is a probability distribution
over the set of objects. For any $i \in N$ and any $a \in A$, $L_{ia}$ denotes the ``share" of
agent $i$ of object $a$. The second constraint of the assignment
definition requires that the total share of every agent is $1$.
The third constraint of the assignment requires that every object is
completely assigned.
Let $\mathcal{L}$ be the set of all assignments.

An assignment $L$ is {\bf deterministic} if $L_{ia} \in \{0,1\}$ for all
$i \in N$ and for all $a \in A$. Let $\mathcal{L}^d$ be the set of all
deterministic assignments. By the Birkohff-von-Neumann theorem,
for every $L \in \mathcal{L}$, there exists a set of deterministic assignments
in $\mathcal{L}^d$ whose convex combination equals $L$.

{\bf Preferences.}
A preference is a strict ordering of $A$. The preference of an agent $i$ will be
denoted by $P_i$.
The set of all preferences over $A$ is denoted by $\mathcal{P}$.
A preference profile
is $\mathbf{P} \equiv (P_1,\ldots,P_{n})$, and we will denote by
$P_{-i}$ the preference profile $\mathbf{P}$ excluding the preference
$P_i$ of agent $i$. We write $aP_ib$ to denote that $a$ is strictly preferred over $b$
in preference $P_i$.

{\bf Prior.} We assume that the preference of each agent
is independently and identically drawn using a common prior
$\mu$, which is a probability distribution over $\mathcal{P}$.
From now on, whenever we say a prior, we refer to such an independent and identical prior.
We will denote by $\mu(P_i)$ the probability with which
agent $i$ has preference $P_i$. With some abuse of notation, we will
denote the probability with which agents in $N \setminus \{i\}$ have
preference profile $P_{-i}$ as $\mu(P_{-i})$. Note that by independence,
$\mu(P_{-i}) = \times_{j \ne i}\mu(P_j)$.

\section{Ordinal Bayesian incentive compatibility}

Our solution concept is Bayes-Nash equilibrium but
we restrict attention to ordinal mechanisms, i.e., mechanisms where
we only elicit ranking over objects from each agent.
Hence, whenever we say {\em mechanism}, we refer to such ordinal mechanisms.\footnote{
The restriction to not consider cardinal mechanisms is arguably arbitrary. It is usually done to
simplify the process of elicitation. Such restriction is also consistent
with the literature on random assignment models. The set of
incentive compatible mechanisms expand if we consider cardinal mechanisms~\citep{Mi12,Ab20}.}
Formally, a {\bf mechanism} is a map $Q: \mathcal{P}^n \rightarrow \mathcal{L}$.
A mechanism $Q$
assigns a share vector $Q_i(\mathbf{P})$ to agent $i$ at every preference profile $\mathbf{P}$.

Before discussing the notions of incentive compatibility, it is useful to think how
agents compare share vectors in our model. Fix agent $i$ with a preference $P_i$
over the set of objects $A$. Denote the $k$-th ranked object in $P_i$ as $P_i(k)$.
Consider two share vectors $\pi, \pi'$.
For every $a \in A$, we will denote by $\pi_a$ and $\pi'_a$ the share
assigned to object $a$ in $\pi$ and $\pi'$ respectively.
We will say $\pi$ {\bf first-order-stochastically-dominates (FOSD)} $\pi'$ according to
$P_i$ if
\begin{align*}
\sum_{k=1}^{\ell} \pi_{P_i(k)} \ge \sum_{k=1}^{\ell}\pi'_{P_i(k)}~\qquad~\forall~\ell \in \{1,\ldots,n\}.
\end{align*}
In this case, we will write $\pi \succ_{P_i} \pi'$. Notice that $\succ_{P_i}$ is not
a complete relation over the outcomes. An equivalent (and well known) definition
of $\succ_{P_i}$ relation is that for {\em every} von-Neumann-Morgenstern utility representation of
$P_i$, the expected utility from $\pi$ is at least as much as $\pi'$.

The most standard notion of incentive compatibility is
strategy-proofness (dominant strategy incentive compatibility), which uses the FOSD
relation to compare share vectors.
\begin{defn}
A mechanism $Q$ is {\bf strategy-proof} if for every $i \in N$, every $P_{-i} \in \mathcal{P}^{n-1}$,
and every $P_i, P'_i \in \mathcal{P}$, we have
\begin{align*}
Q_i(P_i,P_{-i}) \succ_{P_i} Q_i(P'_i,P_{-i}).
\end{align*}
\end{defn}
The interpretation of this definition is that fixing the preferences of other agents, the
truth-telling share vector must FOSD other share vectors that can be obtained by deviation.
This definition of strategy-proofness appeared in \citet{Gibbard77} for voting problems, and has been the
standard notion in the literature on random voting and random assignment problems.

The ordinal Bayesian incentive compatibility notion is an adaptation of this by changing the
solution concept to Bayes-Nash equilibrium. It was first introduced and studied in a voting committee
model in \citet{dAsp88}, and was later used in many voting models~\citep{Majumdar04}.
To define it formally, we introduce the notion of an interim share vector.
Fix an agent $i$ with preference $P_i$.
Given a mechanism $Q$, the {\bf interim share} of object $a$ for agent $i$ by reporting $P'_i$
is:
\begin{align*}
q_{ia}(P'_i) &= \sum_{P_{-i} \in \mathcal{P}^{n-1}}\mu(P_{-i})Q_{ia}(P'_i,P_{-i}).
\end{align*}
The {\bf interim share vector} of agent $i$ by reporting $P'_i$ will be denoted
as $q_{i}(P'_i)$.
\begin{defn}
  A mechanism $Q$ is {\bf ordinally Bayesian incentive compatible (OBIC)} (with respect to prior $\mu$)
  if for every $i \in N$ and every $P_i, P'_i \in \mathcal{P}$, we have
  \begin{align*}
  q_i(P_i) \succ_{P_i} q_i(P'_i).
  \end{align*}
\end{defn}
It is immediate that if a mechanism $Q$ is strategy-proof it is OBIC with respect to every (including correlated and non-identical) prior.
Conversely, if a mechanism is OBIC with respect to {\em all} priors (including correlated and non-identical priors),
then it is strategy-proof.

\subsection{A motivating example}

We investigate a simple example to understand the implications of strategy-proofness and OBIC
for the probabilistic serial mechanism.
Suppose $n=3$ with three objects $\{a,b,c\}$. Consider the preference profiles $(P_1,P_2,P_3)$
and $(P'_1,P_2,P_3)$ shown in Table \ref{tab:t1} -- the table also shows the share vector
of each agent in the probabilistic serial mechanism of \citet{Bogo01}. In the probabilistic serial
mechanism, each agent starts ``eating" her favorite object simultaneously till the object
is finished. Then, she moves to the best available object according to her preference and so on.
Each agent has the same eating speed.
Table \ref{tab:t1} shows the output of the probabilistic serial mechanism for
preference profiles $(P_1,P_2,P_3)$ and $(P'_1,P_2,P_3)$. Since $Q_{1a}(P'_1,P_2,P_3)+Q_{1c}(P'_1,P_2,P_3)
> Q_{1a}(P_1,P_2,P_3)+Q_{1c}(P_1,P_2,P_3)$, we conclude that $Q_1(P_1,P_2,P_3) \nsucc_{P'_1} Q_1(P'_1,P_2,P_3)$.
Hence, agent $1$ can manipulate from $P_1$ to $P'_1$, when agents $2$ and $3$ have
preferences $(P_2,P_3)$.

\begin{table}
\centering
\begin{tabular}{|c|c|c||c|c|c|}
\hline
$P_1$ & $P_2$ & $P_3$ & $P'_1$ & $P_2$ & $P_3$ \\
\hline
\hline
$c$; $\frac{1}{2}$ & $a$; $\frac{1}{2}$ & $c$; $\frac{3}{4}$ & $a$; $\frac{1}{2}$ & $a$; $\frac{2}{3}$ & $c$; $\frac{1}{2}$ \\
$a$; $\frac{1}{6}$ & $b$; $\frac{1}{2}$ & $a$; $0$ & $c$; $\frac{1}{4}$ & $b$; $\frac{1}{3}$ & $a$; $\frac{1}{6}$ \\
$b$; $\frac{1}{3}$ & $c$; $0$ & $b$; $\frac{1}{4}$ & $b$; $\frac{1}{4}$ & $c$; $0$ & $b$; $\frac{1}{3}$ \\
\hline
\end{tabular}
\label{tab:t1}
\caption{Manipulation by agent $1$.}
\end{table}

When can such a manipulation be prevented by OBIC?
Note that $P_1$ is generated from $P'_1$ by permuting $a$ and $c$.
Suppose we permute $P_2$ and $P_3$ also to get $P'_2$ and $P'_3$ respectively:
$$c~P'_2~b~P'_2~a~\textrm{and}~a~P'_3~c~P'_3~b.$$
Since the probabilistic serial mechanism is {\em neutral} (with respect to objects),
the share vector of agent $1$ at $(P_1,P_2,P_3)$ is a permutation of its share vector at
$(P'_1,P'_2,P'_3)$.
Further, when all the preferences are equally likely,
the probability of $(P_2,P_3)$ is equal to the probability of $(P'_2,P'_3)$.
So, the total expected probability of $a$ and $c$ for agent $1$ at $P_1$ and $P'_1$
is the same (where expectation is taken over $(P_2,P_3)$ and $(P'_2,P'_3)$).
As
we show below, this argument generalizes and the expected share vector at $P_1$
first-order-stochastic-dominates the expected share vector at $P'_1$ when the true preference is $P_1$
and prior is uniform.

\section{Uniform prior and possibilities}
\label{sec:uniform}

In this section, we present our first result which shows that the set of OBIC mechanisms
is much larger than the set of strategy-proof mechanisms if the prior is the uniform prior.
A prior $\mu$ is the {\bf uniform prior} if
$\mu(P_i) = \frac{1}{|\mathcal{P}|} = \frac{1}{n!}$ for each $P_i \in \mathcal{P}$.
Uniform prior puts equal probability on each of the possible preferences.
We call a mechanism {\bf U-OBIC} if it is OBIC with respect to the uniform prior.

We show that there is a large class of mechanisms which are U-OBIC - this will include some well-known
mechanisms which are known to be not strategy-proof. This class is characterized by two axioms, neutrality
and elementary monotonicity, which we define next. To define neutrality, consider any permutation $\sigma: A \rightarrow A$
of the set of objects. For every preference $P_i$, define $P_i^{\sigma}$ as the preference that satisfies:
$aP_ib$ if and only if $\sigma(a)P^{\sigma}_i\sigma(b)$.
Let $\mathbf{P}^{\sigma}$ be the
preference profile generated by permuting each preference in the preference profile $\mathbf{P}$
by the permutation $\sigma$.
\begin{defn}
A mechanism $Q$ is {\bf neutral} if for every $\mathbf{P}$ and every
permutation $\sigma$,
\begin{align*}
Q_{ia}(\mathbf{P}) &= Q_{i\sigma(a)}(\mathbf{P}^{\sigma})~\qquad~\forall~i \in N,~\forall~a \in A.
\end{align*}
\end{defn}
Neutrality requires that objects be treated symmetrically by the mechanism. Any mechanism which
does not use the ``names" of the objects is neutral -- this includes all priority mechanisms (including the random priority mechanism),
the simultaneous eating algorithms (including the probabilistic serial mechanism) in \citet{Bogo01}.

Our next axiom is elementary monotonicity, an axiom which requires a mild form of monotonicity.
This was introduced in \citet{Majumdar04}.
To define it, we need the notion of ``adjacency" of preferences. We say preferences $P_i$
and $P'_i$ are {\bf adjacent} if there exists a $k \in \{1,\ldots,n-1\}$ such that
$$P_i(k)=P'_i(k+1), P_i(k+1)=P'_i(k),~\textrm{and}~P_i(k')=P'_i(k')~\forall~k' \notin \{k,k+1\}.$$
In other words, $P'_i$ is obtained by swapping consecutively
ranked objects in $P_i$. Here, if $P_i(k)=a$ and $P_i(k+1)=b$, we say that $P'_i$ is an $(a,b)$-swap of $P_i$.
\begin{defn}\label{def:em}
A mechanism $Q$ satisfies {\bf elementary monotonicity} if for every $i \in N$, every $P_{-i} \in \mathcal{P}^{n-1}$,
and every $P_i, P'_i \in \mathcal{P}$ such that $P'_i$ is an $(a,b)$-swap of $P_i$ for some $a,b$, we have
\begin{align}
Q_{ib}(P'_i,P_{-i}) &\ge Q_{ib}(P_i,P_{-i}) \label{eq:em1}\\
Q_{ia}(P'_i,P_{-i}) &\le Q_{ia}(P_i,P_{-i}) \label{eq:em2}
\end{align}
\end{defn}
In other words, as agent $i$ lifts alternative $b$ in ranking by one position by swapping it with $a$ (and keeping
the ranking of every other object the same), elementary monotonicity requires that the share of object $b$
should weakly increase for agent $i$, while share of object $a$ should weakly decrease.
A similar axiom called {\sl swap monotonicity} is used in \citet{Me21}.\footnote{Swap monotonicity
requires the following change in Definition \ref{def:em}: either $Q_i(P'_i,P_{-i})=Q_i(P_i,P_{-i})$ or
Inequalities (\ref{eq:em1}) and (\ref{eq:em2}) hold with strict inequalities.}

It is not difficult to see that elementary monotonicity is a necessary condition for strategy-proofness -- see \citet{Majumdar04}.
As we show later, elementary monotonicity is satisfied by a variety of
mechanisms - including those which are not strategy-proof. However, every neutral mechanism satisfying
elementary monotonicity is U-OBIC.
\begin{theorem}
\label{theo:uniform}
Every neutral mechanism satisfying elementary monotonicity is U-OBIC.
\end{theorem}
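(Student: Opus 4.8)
The plan is to extract two facts from the hypotheses and then combine them by an elementary sorting argument. From neutrality together with the permutation-invariance of the uniform prior I will obtain a symmetry identity for interim shares; from elementary monotonicity (integrated against the prior) I will obtain that each interim share vector is sorted according to the report that generated it; the two facts then give the required first-order stochastic dominance directly.

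First I would establish the symmetry identity: for a neutral mechanism $Q$, every agent $i$, every report $R_i \in \mathcal{P}$, and every permutation $\sigma$ of $A$,
\[ q_{ia}(R_i) = q_{i\sigma(a)}(R_i^{\sigma}) \qquad \text{for all } a \in A. \]
Indeed, in $q_{ia}(R_i) = \sum_{P_{-i}} \mu(P_{-i}) Q_{ia}(R_i,P_{-i})$ one replaces $Q_{ia}(R_i,P_{-i})$ by $Q_{i\sigma(a)}(R_i^{\sigma}, P_{-i}^{\sigma})$ using neutrality, re-indexes the sum by $P'_{-i} = P_{-i}^{\sigma}$ (a bijection of $\mathcal{P}^{n-1}$), and uses that the uniform prior assigns $P_{-i}$ and $P_{-i}^{\sigma}$ the same probability. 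Two consequences will be used: (a) taking $\sigma$ to be the transposition of $a$ and $b$ gives $q_{ia}(R_i) = q_{ib}(R_i')$ whenever $R_i'$ is the $(a,b)$-swap of $R_i$; and (b) taking the $\sigma$ with $P_i^{\sigma} = P'_i$ (namely $\sigma(P_i(k)) = P'_i(k)$) gives $q_{i P_i(k)}(P_i) = q_{i P'_i(k)}(P'_i)$ for every $k$.

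Second, I would show that elementary monotonicity makes each interim share vector comonotone with its report: for every $R_i$, if $a\, R_i\, b$ then $q_{ia}(R_i) \ge q_{ib}(R_i)$. When $a$ is ranked just above $b$ in $R_i$, let $R_i'$ be the $(a,b)$-swap; summing inequality (\ref{eq:em1}) over $P_{-i}$ against $\mu$ gives $q_{ib}(R_i') \ge q_{ib}(R_i)$, while consequence (a) gives $q_{ia}(R_i) = q_{ib}(R_i')$, so $q_{ia}(R_i) \ge q_{ib}(R_i)$. The general case follows by chaining this inequality through the consecutively ranked objects between $a$ and $b$ in $R_i$. Hence $q_{i R_i(1)}(R_i) \ge q_{i R_i(2)}(R_i) \ge \cdots \ge q_{i R_i(n)}(R_i)$, i.e.\ the vector $q_i(R_i)$, listed in the order of $R_i$, is weakly decreasing.

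Finally, fixing $i$, $P_i$, $P'_i$ and $\ell \in \{1,\dots,n\}$: by consequence (b), $\sum_{k=1}^{\ell} q_{i P_i(k)}(P_i) = \sum_{k=1}^{\ell} q_{i P'_i(k)}(P'_i)$, and by the previous step the right-hand side is the sum of the $\ell$ largest coordinates of the vector $q_i(P'_i)$, hence at least $\sum_{k=1}^{\ell} q_{i P_i(k)}(P'_i)$, the sum of the coordinates over the arbitrary $\ell$-set $\{P_i(1),\dots,P_i(\ell)\}$; so $\sum_{k=1}^{\ell} q_{i P_i(k)}(P_i) \ge \sum_{k=1}^{\ell} q_{i P_i(k)}(P'_i)$ for every $\ell$, which is $q_i(P_i) \succ_{P_i} q_i(P'_i)$. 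I expect the step needing the most care is the comonotonicity claim and its use: elementary monotonicity constrains only single adjacent swaps, so one must check that integrating against the uniform prior and then chaining genuinely sorts $q_i(R_i)$ by $R_i$, and that the concluding "sum of the $\ell$ largest coordinates" bound survives ties (it does, since weak inequalities suffice); it is also worth confirming that nothing beyond neutrality, elementary monotonicity, and permutation-invariance of the prior enters the argument.
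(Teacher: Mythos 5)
Your proposal is correct and follows essentially the same route as the paper's proof: its Step 1 is your symmetry identity (neutrality plus permutation-invariance of the uniform prior), its Step 2 is your comonotonicity claim (elementary monotonicity integrated against the prior, combined with the symmetry identity), and its Step 3 is your concluding rearrangement argument showing the top-$\ell$ sums dominate. The only cosmetic difference is that you derive sortedness from inequality (\ref{eq:em1}) while the paper uses (\ref{eq:em2}); both work identically.
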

\begin{proof}
Fix a neutral mechanism $Q$ satisfying elementary monotonicity.
The proof goes in various steps. \\

\noindent {\sc Step 1.} Pick an agent $i$ and two preferences
$P_i$ and $P'_i$. Pick any $k \in \{1,\ldots,n\}$ and suppose $P_i(k)=a$ and
$P'_i(k)=b$. We show that the interim shares of $a$ and $b$ are same for agent $i$ in preferences $P_i$ and $P'_i$:
$q_{ia}(P_i) = q_{ib}(P'_i)$.
This is a consequence of uniform prior and neutrality. To see this, let $P'_i = P_i^{\sigma}$
for some permutation $\sigma$ of objects in $A$. Then, $b = \sigma(a)$ and hence, for every
$P_{-i}$, we have
\begin{align*}
Q_{ia}(P_i,P_{-i}) &= Q_{i\sigma(a)}(P_i^{\sigma},P_{-i}^{\sigma}) = Q_{ib}(P'_i,P_{-i}^{\sigma}).
\end{align*}
Due to uniform prior and using the above expression,
\begin{align*}
q_{ia}(P_i) &= \frac{1}{(n!)^{n-1}} \sum_{P_{-i}}Q_{ia}(P_i,P_{-i})  = \frac{1}{(n!)^{n-1}} \sum_{P_{-i}}Q_{ib}(P'_i,P_{-i}^{\sigma})
= \frac{1}{(n!)^{n-1}} \sum_{P_{-i}}Q_{ib}(P'_i,P_{-i}) = q_{ib}(P'_i),
\end{align*}
where the third equality follows from the fact that $\{P_{-i}: P_{-i} \in \mathcal{P}^{n-1}\}=\{P_{-i}^{\sigma}: P_{-i} \in \mathcal{P}^{n-1}\}$.

In view of step 1, with some abuse of notation, we write $q_{ik}$ to denote the interim share of the object at rank $k$ in the preference. We call
$q_i$ the interim {\bf rank vector} of agent $i$. \\

\noindent {\sc Step 2.} Pick an agent $i$ and a preference $P_i$. We show that interim shares are non-decreasing with rank:
$q_{ik} \ge q_{i(k+1)}$ for all $k \in \{1,\ldots,n-1\}$. Fix a number $k$ and let $P_i(k)=a$ and $P_i(k+1)=b$.
Then, consider the preference $P'_i$, which is an $(a,b)$-swap of $P_i$. For every $P_{-i}$,
elementary monotonicity implies $Q_{ia}(P_i,P_{-i}) \ge Q_{ia}(P'_i,P_{-i})$. Due to
uniform prior, $q_{ia}(P_i) \ge q_{ia}(P'_i)$. But by Step 1,
\begin{align*}
q_{ik} = q_{ia}(P_i) \ge q_{ia}(P'_i) = q_{i(k+1)}.
\end{align*}

\noindent {\sc Step 3.} We show that $Q$ is OBIC with respect to the uniform prior.
Suppose agent $i$ has preference $P_i$. By Steps 1 and 2, she gets interim rank
vector $(q_{i1},\ldots,q_{in})$ by reporting $P_i$ with $q_{ij} \ge q_{ij+1}$ for all
$j \in \{1,\ldots,n-1\}$. Suppose she reports $P'_i=P^{\sigma}_i$, where $\sigma$ is some
permutation of set of objects. By Steps 1 and 2,
the interim share vector is a permutation of interim rank vector $q_i$ . Using non-decreasingness of this
interim share vector with respect to ranks, we get
$q_i(P_i) \succ_{P_i} q_i(P'_i)$. Hence, $Q$ is OBIC with respect to
uniform prior.
\end{proof}

Theorem \ref{theo:uniform} generalizes, an analogous result in \citet{Majumdar04}, who
consider the voting problem and {\em only} deterministic mechanisms. They arrive at the
same conclusion as Theorem \ref{theo:uniform} in their model. Theorem \ref{theo:uniform} shows that their
result holds even in the {\em random} assignment problem.

\subsection{Probabilistic serial mechanism and U-OBIC}

\citet{Bogo01} define a family of mechanisms, which they call
the {\bf simultaneous eating algorithms} (SEA). Though the SEAs are
not strategy-proof, they satisfy compelling efficiency
and fairness properties, which we discuss in Section \ref{sec:LROBIC}.
We informally introduce the SEAs -- for a formal discussion, see \citet{Bogo01}.

Each SEA is defined by a (possibly time-varying) {\em eating speed} function for each agent. At every preference profile,
agents simultaneously start ``eating" their favorite objects at a rate equal to their eating speed.
Once an object is completely eaten (i.e., the entire share of $1$ is consumed), the amount eaten by each
agent is the share of that agent of that object. Once an object completely eaten, agents go to their
next preferred object and so on.

If the eating speed of each agent is the same, then the simultaneous eating algorithm is
{\em anonymous}. \citet{Bogo01} call the unique anonymous SEA, the {\bf probabilistic serial} mechanism.
\footnote{For axiomatic characterization of the PS mechanism, see \citet{Bogo12} and
\citet{Hashimoto14}.}

\begin{cor}
Every simultaneous eating algorithm is U-OBIC.
\end{cor}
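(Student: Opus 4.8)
The plan is to derive this from Theorem~\ref{theo:uniform}, so that it suffices to show that every simultaneous eating algorithm (SEA) is neutral and satisfies elementary monotonicity. Neutrality is immediate: an SEA is defined purely in terms of the agents' rankings and their (fixed) eating speed functions, and never uses the names of the objects, so running it at $\mathbf{P}^{\sigma}$ instead of $\mathbf{P}$ relabels the whole eating process by $\sigma$ and gives $Q_{ia}(\mathbf{P}) = Q_{i\sigma(a)}(\mathbf{P}^{\sigma})$ for all $i$ and $a$. Thus the real content of the corollary is that every SEA satisfies elementary monotonicity.

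To verify elementary monotonicity, fix an agent $i$, a profile $P_{-i}$, and preferences $P_i, P'_i$ with $P'_i$ an $(a,b)$-swap of $P_i$, say $P_i(k)=a$ and $P_i(k+1)=b$. I would run the eating algorithm at $(P_i,P_{-i})$ and at $(P'_i,P_{-i})$ in parallel and compare them. The two runs agree exactly up to the first time $t^{*}$ at which every object in the set $U:=\{P_i(1),\ldots,P_i(k-1)\}$ has been exhausted, because until then agent $i$'s choices depend only on her ranking of $U$, which is the same in $P_i$ and $P'_i$, and the other agents' preferences are unchanged. In particular the two runs are in the same state at $t^{*}$, and agent $i$'s consumption of each object in $U$ is the same in both. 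If $a$ or $b$ has already been exhausted by $t^{*}$, then agent $i$'s effective behaviour, and hence the whole process, continues to coincide in the two runs, and (\ref{eq:em1})--(\ref{eq:em2}) hold with equality; so assume both are available at $t^{*}$, at which point agent $i$ starts eating $a$ under $P_i$ and $b$ under $P'_i$, after which she eats only from $\{a,b\}$ until both are exhausted and then moves on to $P_i(k+2)=P'_i(k+2)$.

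The heart of the argument is the comparison after $t^{*}$. I would prove, by induction over the breakpoints of the piecewise-linear eating process, the invariant that at every time $t\ge t^{*}$ the remaining amount of $a$ is weakly larger in the $P'_i$-run, the combined remaining amount of $a$ and $b$ is the same in the two runs (so the remaining amount of $b$ is weakly smaller in the $P'_i$-run), and every object other than $a,b$ has exactly the same remaining amount in the two runs. The intuition is that lifting $b$ above $a$ only makes agent $i$ eat $b$ earlier and $a$ later, depleting $b$ faster and $a$ slower while leaving the rest of the economy untouched. This invariant forces $a$ and $b$ to become exhausted at the same overall time in the two runs, after which the runs re-coincide; hence agent $i$'s share of every object other than $a,b$ is unchanged, her share of $a$ weakly decreases, and her share of $b$ weakly increases, which is precisely (\ref{eq:em1})--(\ref{eq:em2}). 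Theorem~\ref{theo:uniform} then yields U-OBIC.

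The main obstacle is propagating the invariant across a breakpoint. The exhaustion time of $a$ (and of $b$) relative to the other object and relative to the end of agent $i$'s $\{a,b\}$-phase can genuinely differ between the two runs --- for instance, $a$ may run out strictly earlier under $P_i$ than under $P'_i$ --- and the set of agents eating a given object at a given instant need not be the same in the two runs, so one must case on which objects are exhausted at the breakpoint and check that the resulting changes of course by the other agents never break the comparison of remaining amounts. A shortcut, at least for the probabilistic serial mechanism, is to invoke the known fact that it is swap monotone and hence satisfies elementary monotonicity; see \citet{Me21}.
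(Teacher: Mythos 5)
Your reduction is the same as the paper's: establish that every SEA is neutral and satisfies elementary monotonicity, then invoke Theorem~\ref{theo:uniform}. The neutrality argument is fine. The gap is in your direct proof of elementary monotonicity: the invariant you propose --- that at every $t \ge t^{*}$ every object outside $\{a,b\}$ has exactly the same remaining amount in the two runs, and hence that agent $i$'s share of every such object is unchanged --- is false for SEAs. It would amount to ex-post lower invariance, and the probabilistic serial mechanism is known to violate lower invariance; indeed the paper's own motivating example exhibits this. There, $P'_1$ is a swap of the top two objects $c$ and $a$ of $P_1$, the third object $b$ is untouched by the swap, and yet agent $1$'s share of $b$ changes from $\tfrac{1}{3}$ to $\tfrac{1}{4}$. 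The mechanism of failure is exactly the one you flag as the ``main obstacle'': when agent $i$ switches to eating $b$ first, $b$ (and later $a$) is exhausted at a different time, so the \emph{other} agents who were eating $b$ move on to their next objects at a different time, which perturbs the remaining amounts of objects outside $\{a,b\}$ and does not wash out. Your claim that $a$ and $b$ are exhausted at the same overall time in both runs also fails for the same reason. So the induction over breakpoints cannot preserve the stated invariant, and the argument as written does not go through.

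Note that elementary monotonicity only asserts the two inequalities (\ref{eq:em1})--(\ref{eq:em2}) for $a$ and $b$; it says nothing about the other objects, and a correct direct proof must not try to establish more (what does survive is \emph{upper} invariance --- the coincidence of the runs up to $t^{*}$, which you argue correctly). The paper sidesteps the whole difficulty by citing Theorem~3 of \citet{Cho18} and Theorem~1 of \citet{Me21} for the fact that every SEA satisfies elementary (swap) monotonicity. Your proposed ``shortcut'' of invoking \citet{Me21} is essentially the paper's actual proof, but you restrict it to the probabilistic serial mechanism, whereas the corollary concerns all SEAs; the cited results do cover the general case, so the clean fix is to rely on them (or to redo their more delicate argument, which tracks only the shares of $a$ and $b$ rather than a global coupling of the two runs).
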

\begin{proof}
Clearly, the SEAs are neutral since eating speeds do not depend on the objects.
The SEAs also satisfy elementary monotonicity: Theorem 3 in \citet{Cho18} and Theorem 1 in \citet{Me21}.
Hence, by Theorem \ref{theo:uniform}, we are done.
\end{proof}

\section{Locally robust OBIC}
\label{sec:LROBIC}

While the uniform prior is an important prior in decision theory,
it is natural to ask if Theorem \ref{theo:uniform} extends to other
``generic" priors. Though we do not have a full answer to this question,
we have been able to answer this question in negative under a natural
robustness requirement. Our robustness requirement is {\em local}.
Take any (independent and identical) prior $\mu$, and let $\mu'$ be any (independent and identical) prior in
the $\epsilon$-radius ball around $\mu$ (where $\epsilon > 0$), i.e.,
$||\mu(P) - \mu'(P)|| < \epsilon$ for all $P \in \mathcal{P}$. In this case, we
write $\mu' \in B_{\epsilon}(\mu)$. Our local robustness requirement is the following.

\begin{defn}
A mechanism $Q$ is {\bf locally robust OBIC (LROBIC)} with respect to a
prior $\mu$ if there exists an $\epsilon > 0$ such that
for every prior $\mu' \in B_{\epsilon}(\mu)$, $Q$ is OBIC with
respect to $\mu'$.
\end{defn}

It is well known that Bayesian incentive compatibility with respect to all
priors lead to strategy-proofness~\citep{Ledyard78}. Here, we require
OBIC with respect to all independent and identical priors in the $\epsilon$-neighborhood of
an independent and identical prior. \citet{Bhargava15} study a version of LROBIC with
respect to {\em uniform} prior but their robustness also allows the mechanism
to be OBIC with respect to correlated priors. They show that
a large class of voting rules satisfy their notion of LROBIC.
We show that in the random assignment model, LROBIC with respect to {\em any} independent and identical prior
has a very different implication.

\begin{theorem}
\label{theo:lobic}
A mechanism is LROBIC with respect to a prior and satisfies elementary monotonicity if and only if it is strategy-proof.
\end{theorem}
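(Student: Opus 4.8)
The plan is to prove the two directions separately. The easy direction is that strategy-proofness implies both LROBIC (with respect to every prior, since strategy-proofness gives OBIC pointwise, hence with respect to all priors in any ball) and elementary monotonicity (already noted in the excerpt, following \citet{Majumdar04}). So the substantive content is the converse: a mechanism $Q$ that is LROBIC with respect to some prior $\mu$ and satisfies elementary monotonicity must be strategy-proof. First I would fix an agent $i$, a profile $P_{-i}$, and two preferences $P_i, P_i'$, and reduce to the case where $P_i'$ is an $(a,b)$-swap of $P_i$: since any two preferences are connected by a sequence of adjacent transpositions, and since FOSD comparisons built from elementary monotonicity compose along such a path, it suffices to rule out a profitable single-swap manipulation. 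With $P_i(k)=a$, $P_i(k+1)=b$, elementary monotonicity already tells us $Q_{ib}(P_i',P_{-i}) \ge Q_{ib}(P_i,P_{-i})$ and $Q_{ia}(P_i',P_{-i}) \le Q_{ia}(P_i,P_{-i})$ and that all objects other than $a,b$ have the first $k-1$ and the last $n-k-1$ partial sums unchanged; so the only way truthful reporting can fail to FOSD the deviation is if $Q_{ia}(P_i,P_{-i}) < Q_{ia}(P_i',P_{-i})$ strictly at the single threshold $\ell=k$ — equivalently, $Q_{ib}(P_i,P_{-i}) > Q_{ib}(P_i',P_{-i})$ strictly. The goal is to show LROBIC forbids this strict inequality at even one profile $P_{-i}$.

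The key idea is a perturbation/aggregation argument. Suppose, for contradiction, that at some $\hat P_{-i}$ we have $Q_{ia}(P_i,\hat P_{-i}) < Q_{ia}(P_i',\hat P_{-i})$. The interim (rank-$k$ vs rank-$k$) comparison that OBIC with respect to a prior $\mu'$ demands is
\begin{align*}
\sum_{P_{-i}} \mu'(P_{-i})\,Q_{ia}(P_i,P_{-i}) \;\ge\; \sum_{P_{-i}} \mu'(P_{-i})\,Q_{ia}(P_i',P_{-i}),
\end{align*}
i.e. $\sum_{P_{-i}} \mu'(P_{-i})\,\Delta(P_{-i}) \le 0$ where $\Delta(P_{-i}) := Q_{ia}(P_i',P_{-i}) - Q_{ia}(P_i,P_{-i})$. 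By elementary monotonicity $\Delta(P_{-i}) \ge 0$ for every $P_{-i}$, and by assumption $\Delta(\hat P_{-i}) > 0$. Since $\mu'$ is an i.i.d.\ prior, $\mu'(P_{-i}) = \prod_{j\ne i}\mu'(P_j) > 0$ for every profile whenever $\mu'$ has full support; and priors with full support are dense, so some full-support $\mu'$ lies in $B_\epsilon(\mu)$. For such $\mu'$ every weight $\mu'(P_{-i})$ is strictly positive, so $\sum_{P_{-i}} \mu'(P_{-i})\Delta(P_{-i}) \ge \mu'(\hat P_{-i})\Delta(\hat P_{-i}) > 0$, contradicting OBIC with respect to $\mu'$. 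Hence no such $\hat P_{-i}$ exists, $\Delta \equiv 0$, truthful reporting FOSD-dominates the one-swap deviation at every profile, and composing along adjacency paths yields strategy-proofness.

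The main obstacle — and the place the argument needs care rather than the slogan above — is the edge case where the given prior $\mu$ is itself on the boundary of the simplex, e.g.\ $\mu(P)=0$ for some preferences $P$. Then it is not immediate that $B_\epsilon(\mu)$ contains a full-support i.i.d.\ prior, nor that one can push positive weight onto the specific adversarial profile $\hat P_{-i}$; one must check that arbitrarily small i.i.d.\ perturbations can place strictly positive probability on \emph{every} component of any chosen $\hat P_{-i}$ simultaneously (this is fine: bump each coordinate $\mu(P)$ up by a tiny amount and renormalize, which keeps the prior i.i.d.\ and within the ball). A secondary point to verify is that the FOSD obstruction really does localize to the single threshold $\ell=k$: I would write out the $\ell < k$, $\ell = k$, and $\ell > k$ partial-sum inequalities explicitly using elementary monotonicity and the fact that an $(a,b)$-swap fixes the relative order of all other objects, confirming that $\ell<k$ and $\ell>k$ hold automatically and only $\ell=k$ is in question. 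Finally, the composition step across an adjacency path should be stated carefully, since $\succ_{P_i}$ is not complete — but transitivity of FOSD along a fixed reference order suffices, since at each step we compare interim vectors under the \emph{true} preference $P_i$ and elementary monotonicity is invoked at the appropriate intermediate preference.
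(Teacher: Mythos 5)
The easy direction of your proposal is fine. The converse, however, breaks at its very first step: you assert that elementary monotonicity ``already tells us \ldots that all objects other than $a,b$ have the first $k-1$ and the last $n-k-1$ partial sums unchanged,'' so that the FOSD comparison between $Q_i(P_i,P_{-i})$ and $Q_i(P'_i,P_{-i})$ can only fail at the single threshold $\ell=k$. Elementary monotonicity constrains only the shares of $a$ and $b$; it says nothing whatsoever about $Q_{ic}$ for $c\notin\{a,b\}$, and a mechanism satisfying it may move probability between, say, $P_i(1)$ and $P_i(n)$ when the $(a,b)$-swap is reported, breaking FOSD already at $\ell=1$. What you are implicitly assuming is the \emph{ex-post} version of upper and lower invariance, and deriving exactly that from LROBIC is the entire content of the hard direction (the paper's Lemma \ref{lem:lr}). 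Your perturbation argument cannot deliver it: for $c\notin\{a,b\}$ the pointwise differences $Q_{ic}(P_i,P_{-i})-Q_{ic}(P'_i,P_{-i})$ have no definite sign, so knowing that their $\mu'$-weighted sum vanishes for one full-support prior $\mu'$ does not force them to vanish pointwise. One genuinely needs OBIC on a set of priors of positive measure, and then a polynomial-identity argument (the zero set of a nonzero polynomial has measure zero) to conclude the ex-post equalities --- this is where ``locally robust'' earns its keep. For $a$ and $b$ themselves, where the sign \emph{is} definite, there is nothing to extract: the interim OBIC inequality and elementary monotonicity point the same way; indeed your $\Delta(P_{-i})=Q_{ia}(P'_i,P_{-i})-Q_{ia}(P_i,P_{-i})$ satisfies $\Delta\le 0$ (not $\ge 0$) by elementary monotonicity, so the contradiction you set up never gets off the ground, and in any case strategy-proofness does not require $\Delta\equiv 0$ for the swapped objects.

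A quick sanity check confirms something must be wrong: as written, your argument invokes only elementary monotonicity and OBIC with respect to a \emph{single} full-support prior. The probabilistic serial mechanism satisfies elementary monotonicity and, by Theorem \ref{theo:uniform}, is OBIC with respect to the (full-support) uniform prior, yet it is not strategy-proof. So any correct proof must use the whole $\epsilon$-ball of priors in an essential way. The paper's route is: OBIC at each prior in the ball implies interim upper/lower invariance (Proposition \ref{prop:char}); holding these interim equalities on a positive-measure set of priors forces the ex-post equalities $Q_{ic}(P_i,P_{-i})=Q_{ic}(P'_i,P_{-i})$ for $c\notin\{a,b\}$ (Lemma \ref{lem:lr}); and ex-post upper invariance, lower invariance, and elementary monotonicity together are equivalent to strategy-proofness by the Mennle--Seuken characterization. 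Your proposal is missing the middle two steps.
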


The proof builds on some earlier results. Before giving the proof, we define some notions and preliminary results.
We first decompose OBIC into three conditions.
This decomposition is similar to the decomposition of strategy-proofness in \citet{Me21} --
there are some minor differences in axioms and we look at interim share vectors whereas
they look at ex-post share vectors.

Our decomposition of OBIC uses the following three axioms.
\begin{defn}
A mechanism $Q$ satisfies {\bf interim elementary monotonicity} if for every $i \in N$ and every $P_i,P'_i$
such that $P'_i$ is an $(a,b)$-swap of $P_i$, we have
\begin{align*}
q_{ib}(P'_i) \ge q_{ib}(P_i) \\
q_{ia}(P'_i) \le q_{ia}(P_i).
\end{align*}
\end{defn}

Give a preference $P_i$ of agent $i$ and an object $a \in A$, define $U(a,P_i):=\{x \in A: x~P_i~a\}$
and $L(a,P_i):=\{x \in A: a~P_i~x\}$.

\begin{defn}
A mechanism $Q$ satisfies {\bf interim upper invariance} if for every $i \in N$ and every $P_i,P'_i$
such that $P'_i$ is an $(a,b)$-swap of $P_i$, and for every $x \in U(a,P_i)$, we have
\begin{align*}
q_{ix}(P'_i) = q_{ix}(P_i).
\end{align*}
\end{defn}

\begin{defn}
A mechanism $Q$ satisfies {\bf interim lower invariance} if for every $i \in N$ and every $P_i,P'_i$
such that $P'_i$ is an $(a,b)$-swap of $P_i$, and for every $x \in L(b,P_i)$, we have
\begin{align*}
q_{ix}(P'_i) = q_{ix}(P_i).
\end{align*}
\end{defn}

The following proposition characterizes OBIC using these axioms.

\begin{prop}\label{prop:char}
A mechanism $Q$ is OBIC with respect to a prior if and only if it satisfies interim elementary monotonicity,
interim upper invariance, and interim lower invariance.
\end{prop}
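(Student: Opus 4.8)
The plan is to prove the two directions of Proposition~\ref{prop:char} separately, with the ``only if'' direction being the more delicate one.

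For the ``if'' direction, suppose $Q$ satisfies interim elementary monotonicity, interim upper invariance, and interim lower invariance. I want to show that for every $i$ and every $P_i, P'_i$, the interim share vector $q_i(P_i)$ FOSD-dominates $q_i(P'_i)$ according to $P_i$. The standard trick is to connect $P_i$ to $P'_i$ by a sequence of adjacent transpositions (swaps). Since FOSD-dominance is in general \emph{not} transitive in a way that directly chains, the cleaner route is: fix $P_i$ as the ``true'' preference and argue that along any sorting path from $P'_i$ to $P_i$ that only ever swaps a pair when doing so moves the preference ``closer'' to $P_i$ (e.g.\ a bubble-sort path), each single swap weakly improves every prefix sum $\sum_{k=1}^{\ell} q_{iP_i(k)}$. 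For a single $(a,b)$-swap with $a$ just above $b$, interim upper invariance fixes all shares of objects above $a$, interim lower invariance fixes all shares of objects below $b$, and interim elementary monotonicity says the share of the object being lifted weakly rises while the other weakly falls; translating this into the prefix sums according to $P_i$, one checks that the only prefix sum that changes is the one cutting between the two swapped objects, and it changes in the direction favorable to the preference in which the higher-ranked-under-$P_i$ object sits on top. Care is needed to track which of the two preferences in the swap is the ``current'' one and which is ``next,'' but this is routine bookkeeping once the sorting path is chosen so that every swap is a step toward $P_i$.

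For the ``only if'' direction, suppose $Q$ is OBIC with respect to $\mu$. Interim elementary monotonicity is essentially immediate: for an $(a,b)$-swap, OBIC applied at $P_i$ gives $q_i(P_i) \succ_{P_i} q_i(P'_i)$, which on the prefix sum ending just before $b$ (equivalently at rank of $a$) yields, combined with the reverse application of OBIC at $P'_i$ on the complementary prefix, the two desired inequalities on $q_{ia}$ and $q_{ib}$. The invariance axioms are the substantive part. Take $x \in U(a,P_i)$; I want $q_{ix}(P'_i) = q_{ix}(P_i)$. Let $\ell$ be the rank of $x$ in $P_i$ (which equals its rank in $P'_i$, since $x$ is above the swapped pair). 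OBIC at $P_i$ gives $\sum_{k=1}^{\ell} q_{iP_i(k)}(P_i) \ge \sum_{k=1}^{\ell} q_{iP_i(k)}(P'_i)$, and OBIC at $P'_i$ gives the reverse inequality for the prefix of $P'_i$ of length $\ell$ — but these two prefixes are the \emph{same set} of objects since the first $\ell$ ranked objects coincide in $P_i$ and $P'_i$. Hence $\sum_{k=1}^{\ell} q_{iP_i(k)}(P_i) = \sum_{k=1}^{\ell} q_{iP_i(k)}(P'_i)$ for every $\ell$ at which the prefixes of the two orders agree. Since $x \in U(a,P_i)$ means every $\ell \le \mathrm{rank}(x)$ is such an index, differencing consecutive equalities gives $q_{ix}(P_i) = q_{ix}(P'_i)$. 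Interim lower invariance is the symmetric argument using prefixes of length $\ge \mathrm{rank}(a)+1$, where again the two orders have identical suffixes.

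The main obstacle, and where I expect to spend the most care, is the ``if'' direction: proving FOSD-dominance by chaining single swaps, because one must (i) pick the sorting path so that each swap is monotone toward the target $P_i$ and therefore genuinely usable, (ii) correctly identify, for a given swap along the path, which preference plays the role of ``$P_i$'' and which plays ``$P'_i$'' in the statements of the three interim axioms, and (iii) verify that the prefix-sum inequality established for one swap composes additively along the whole path to yield $q_i(P_i) \succ_{P_i} q_i(P'_i)$. The invariance-axiom consequences are what make the per-swap change localized to a single prefix sum, so the composition works, but writing this cleanly requires a careful indexing lemma. The ``only if'' direction, by contrast, is essentially the observation that OBIC at $P_i$ and at $P'_i$ sandwich the relevant prefix sums to equality whenever the two orders share that prefix (or suffix), and I expect it to be short.
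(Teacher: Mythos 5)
Your proposal is correct, and your ``only if'' direction is essentially the paper's own argument: applying OBIC in both directions across an adjacent swap sandwiches the prefix sums over the shared upper (respectively lower) contour sets to equality, and differencing consecutive equalities yields interim upper (and, symmetrically, lower) invariance, with interim elementary monotonicity then falling out of the two remaining prefix constraints. Where you genuinely differ is the ``if'' direction. The paper reduces the problem to \emph{locally} OBIC (incentive constraints only between adjacent preferences), leaves the verification that the three interim axioms imply local OBIC to the reader, and invokes Proposition 2 of Carroll (2012) for the local-to-global step. You instead propose a direct, self-contained chaining argument: sort $P'_i$ into $P_i$ by adjacent transpositions, each of which reduces the number of inversions relative to $P_i$; at each step the two invariance axioms (together with the fact that interim shares sum to one) force the shares of all objects other than the swapped pair to be unchanged, so the swapped pair's shares trade off one-for-one, and elementary monotonicity says the transfer goes to the object ranked higher under $P_i$; hence every $P_i$-prefix sum weakly increases at each step, and these inequalities compose additively along the path to give $q_i(P_i) \succ_{P_i} q_i(P'_i)$. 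This replaces the external citation with an elementary argument and makes the proposition self-contained, at the cost of the ``careful indexing lemma'' you anticipate. One small imprecision: your claim that ``the only prefix sum that changes is the one cutting between the two swapped objects'' is true relative to the current preference on the path, but relative to the target $P_i$ the swapped objects need not be adjacent, so several prefix sums may change; since all of them change by the same nonnegative amount, the conclusion is unaffected.
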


Since the proof of Proposition \ref{prop:char} is similar to the characteration of
strategy-proofness in \citet{Me21}, we skip its proof.\footnote{As we discussed, swap monotonicity discussed in \citet{Me21}
is slightly different than elementary monotonicity (neither conditions imply the other), but in the presence of the other two axioms in \citet{Me21} (upper invariance and lower invariance),
they are equivalent.}
The proof of Theorem \ref{theo:lobic} is based on Proposition \ref{prop:char} and the following lemma.
\begin{lemma}\label{lem:lr}
Suppose $Q$ mechanism is LROBIC with respect to a prior. Then, for every $i$,
for every $P_{-i}$, for every $P_i$ and $P'_i$ such that $P'_i$ is an $(a,b)$ swap of $P_i$, we have
\begin{align*}
Q_{ic}(P_i,P_{-i}) &= Q_{ic}(P'_i,P_{-i})~\qquad~\forall~c \notin \{a,b\}.
\end{align*}
\end{lemma}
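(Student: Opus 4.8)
The plan is to convert the ``neighbourhood of priors'' hypothesis into polynomial identities in the prior and then extract the ex-post conclusion from them. By Proposition~\ref{prop:char}, $Q$ being LROBIC means that for every prior $\mu'$ in some ball $B_\epsilon(\mu)$, $Q$ satisfies interim upper invariance and interim lower invariance with respect to $\mu'$. Fix an agent $i$, an $(a,b)$-swap pair $(P_i,P'_i)$, and an object $c\notin\{a,b\}$. Since $a$ and $b$ occupy consecutive ranks in $P_i$, the object $c$ is either strictly above $a$, i.e.\ $c\in U(a,P_i)$, or strictly below $b$, i.e.\ $c\in L(b,P_i)$; I treat the first case, the second being identical with interim lower invariance in place of interim upper invariance. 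Interim upper invariance with respect to each $\mu'\in B_\epsilon(\mu)$ gives $q_{ic}(P'_i)=q_{ic}(P_i)$, that is,
\begin{align*}
\sum_{P_{-i}\in\mathcal{P}^{n-1}}\Big(\prod_{j\ne i}\mu'(P_j)\Big)\big[Q_{ic}(P'_i,P_{-i})-Q_{ic}(P_i,P_{-i})\big]=0\qquad\text{for every }\mu'\in B_\epsilon(\mu).
\end{align*}

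Next I would regard the single-agent distribution underlying $\mu'$ as a vector $\nu=(\nu_R)_{R\in\mathcal{P}}$ of $n!$ nonnegative reals summing to $1$; the left-hand side above is then a polynomial in $\nu$, homogeneous of degree $n-1$, vanishing on the nonempty relatively open set $B_\epsilon(\mu)$ of the simplex. A homogeneous polynomial that vanishes on a nonempty relatively open subset of the hyperplane $\{\sum_R\nu_R=1\}$ is the zero polynomial, so each of its monomial coefficients vanishes. The coefficient of $\prod_R\nu_R^{k_R}$ equals $\sum_{P_{-i}}\big[Q_{ic}(P'_i,P_{-i})-Q_{ic}(P_i,P_{-i})\big]$, the sum being over all $P_{-i}$ in which each preference $R$ occurs exactly $k_R$ times; hence for every multiset $\kappa$ of $n-1$ preferences,
\begin{align*}
\sum_{P_{-i}\,:\,\mathrm{mult}(P_{-i})=\kappa}\big[Q_{ic}(P'_i,P_{-i})-Q_{ic}(P_i,P_{-i})\big]=0.
\end{align*}
For $\kappa$ of the form $\{R,R,\dots,R\}$ this is already the desired pointwise equality at the profile $(R,\dots,R)$.

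The remaining, and genuinely hard, step is to pass from these multiset-aggregated equalities to the pointwise equalities for every $P_{-i}$: because the prior is i.i.d., an agent's interim share depends only on the multiset of the others' reports, so the displayed aggregate identity is the full content of $i$'s LROBIC constraint for this swap and object. I would complete the argument by induction on $n$ — the base case $n=2$ being immediate, since every multiset of size one is a singleton and the aggregate identity is then already pointwise — and in the inductive step I would split the sum over $P_{-i}$ by the report of a fixed agent $j\ne i$ and combine the aggregate identities for all agents with feasibility $\sum_{i'}Q_{i'c}(\cdot)=1$ and with the interim elementary monotonicity inequalities (which, holding on an open subset of the simplex, pin down $Q$ at the zeros of the corresponding polynomials). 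The point I expect to be the true obstacle is exactly this disaggregation: reconciling the symmetric, aggregated consequences of LROBIC with an asymmetric, profile-by-profile target, and arranging the ``fix one agent's report'' reduction so that it yields a genuinely smaller instance to which the inductive hypothesis applies.
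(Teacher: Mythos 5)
You are on the same track as the paper: both arguments invoke Proposition~\ref{prop:char} to turn LROBIC into the interim upper/lower invariance identities, observe that for fixed $(i,P_i,P'_i,c)$ the identity
\begin{align*}
\sum_{P_{-i}}\Big(\prod_{j\ne i}\mu'(P_j)\Big)\big[Q_{ic}(P_i,P_{-i})-Q_{ic}(P'_i,P_{-i})\big]=0
\end{align*}
is a polynomial equation in the $n!$ variables $\{\mu'(R)\}_{R\in\mathcal{P}}$, and conclude from its validity on the positive-measure set $B_\epsilon(\mu)$ that the polynomial is identically zero. Your case split ($c\in U(a,P_i)$ or $c\in L(b,P_i)$ because $a$ and $b$ are consecutively ranked in $P_i$) and the vanishing-on-an-open-set argument are both fine.

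The problem is that your proof stops before it reaches the statement of the lemma. Setting all coefficients of the polynomial to zero yields only the multiset-aggregated identities $\sum_{P_{-i}:\,\mathrm{mult}(P_{-i})=\kappa}\big[Q_{ic}(P_i,P_{-i})-Q_{ic}(P'_i,P_{-i})\big]=0$, and you explicitly defer the passage to the per-profile equalities to an induction that you do not carry out and whose proposed ingredients do not obviously deliver it: feasibility $\sum_{i'}Q_{i'c}(\cdot)=1$ ties together different agents' shares at a single profile, not agent $i$'s shares across the two reports $P_i$ and $P'_i$; interim elementary monotonicity is an inequality on interim (hence again aggregated) quantities, so it supplies no sign restriction on the individual ex-post differences that would force a zero sum to have zero terms; and fixing one opponent's report does not produce a smaller instance of the same problem, since the remaining constraints are still aggregated over the other $n-2$ opponents' multisets. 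As written, your argument establishes the conclusion only at symmetric profiles $P_{-i}=(R,\dots,R)$. For comparison, the paper's proof reads the per-profile equalities directly off the vanishing of the polynomial and does not engage with the aggregation issue you raise; your observation that distinct ordered profiles with the same multiset contribute to the same monomial is a fair reading of that polynomial, but flagging the obstacle is not the same as overcoming it, and the lemma remains unproved in your write-up.
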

\begin{proof}
Pick an agent $i \in N$
and $P_i,P'_i \in \mathcal{P}$ such that
$P'_i$ is an $(a,b)$-swap of $P_i$. Fix some $P_{-i}$. By Proposition \ref{prop:char}, $Q$
satisfies interim upper invariance
and interim lower invariance. Hence, we know that for all $c \notin \{a,b\}$,
we get
\begin{align}\label{eq:obicvar}
\sum_{P_{-i}}\mu(P_{-i}) \Big[Q_{ic}(P_i,P_{-i}) - Q_{ic}(P'_i,P_{-i})\Big] = 0.
\end{align}
Since $\mu$ is a probability distribution over $\mathcal{P}$, we can
treat it as a vector in $\mathbb{R}^{n!-1}$. Using
$\mu(P_{-i}) \equiv \times_{j \ne i}\mu(P_j)$, we note that
the LHS of the Equation (\ref{eq:obicvar}) is a polynomial function of
$\{\mu(P)\}_{P \in \mathcal{P}}$. The equation describes
the zero set of this polynomial function. For non-zero polynomials, the set of zeros has measure zero~\citep{Caron05}, i.e.,
the set of $\mu$ satisfying Equation (\ref{eq:obicvar}) has measure zero.\footnote{Note that
we are not characterizing the set of $\mu$ for which (\ref{eq:obicvar}) has a solution. Our claim
is only about the measure of the set of solutions.}
Hence, given any prior $\mu^*$ and $\epsilon > 0$, if
Equation \ref{eq:obicvar} has to hold for {\em all} $\mu \in B_{\epsilon}(\mu^*)$ (which
has non-zero measure),
then $Q_{ic}(P_i,P_{-i})=Q_{ic}(P'_i,P_{-i})$ for all $c \notin \{a,b\}$.
\end{proof}

We now complete the proof of Theorem \ref{theo:lobic}. \\

\noindent {\sl Proof of Theorem \ref{theo:lobic}} \\

\begin{proof}
Every strategy-proof mechanism is OBIC with respect to any prior. A strategy-proof mechanism satisfies elementary monotonicity.
So, we now focus on the other direction of the proof.
Let $Q$ be an LROBIC mechanism with respect to a prior $\mu$. Suppose $Q$ satisfies elementary monotonicity.

By Lemma \ref{lem:lr}, any LROBIC mechanism $Q$, $Q$ satisfies {\em ex-post} versions of interim lower invariance and
interim upper invariance. \citet{Me21} refer to these properties as
{\em upper invariance} and {\em lower invariance} (see also \citet{Cho18}).
They show that upper invariance, lower invariance, and elementary monotonicity are
equivalent to strategy-proofness.
By the assumption of the theorem, $Q$ satisfies elementary monotonicity. Hence, it is strategy-proof.
\end{proof}

We now explore the compatibility of LROBIC and ordinal efficiency.

\begin{defn}
A mechansim $Q$ is {\bf ordinally efficient} if at every preference profile
$\mathbf{P}$ there exists no assignment $L$ such that
\begin{align*}
L_i \succ_{P_i} Q_i(\mathbf{P})~\qquad~\forall~i \in N,
\end{align*}
with $L_i \ne Q_i(\mathbf{P})$ for some $i$.
\end{defn}

\citet{Bogo01} show that every ordinally efficient mechanism is ex-post efficient but
the converse is not true if $n \ge 4$. In fact, for $n \ge 4$, strategy-proofness is
incompatible with ordinally efficiency along with the following weak fairness criterion.

\begin{defn}
A mechanism $Q$ satisfies {\bf equal treatment of equals} if at every preference
profile $\mathbf{P}$ and for every $i,j \in N$, we have
\begin{align*}
\Big[ P_i = P_j \Big] \Rightarrow \Big[ Q_i(\mathbf{P})=Q_j(\mathbf{P})\Big]
\end{align*}
\end{defn}

Due to Theorem \ref{theo:lobic}, we can strengthen the impossibility results in \citet{Bogo01}
 and \citet{Mennle17} as follows.

\begin{cor}\label{cor:imp}
Suppose $n \ge 4$. Then, there is no locally robust OBIC and ordinally efficient
mechanism satisfying equal treatment of equals.
\end{cor}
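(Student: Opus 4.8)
The plan is to argue by contradiction and reduce to the impossibility of \citet{Bogo01}. Suppose $Q$ is locally robust OBIC with respect to some prior, ordinally efficient, and satisfies equal treatment of equals, with $n\ge 4$. The route is to show that, \emph{given} ordinal efficiency, local robustness already forces elementary monotonicity; then Theorem \ref{theo:lobic} makes $Q$ strategy-proof, and a strategy-proof, ordinally efficient mechanism satisfying equal treatment of equals cannot exist for $n\ge 4$.

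First I would record what local robustness gives for free. By Proposition \ref{prop:char}, $Q$ satisfies interim upper and interim lower invariance; more usefully, Lemma \ref{lem:lr} upgrades this to the \emph{ex post} statement that for every $i$, every $P_{-i}$, and every $(a,b)$-swap $P_i,P'_i$ of agent $i$ one has $Q_{ic}(P_i,P_{-i})=Q_{ic}(P'_i,P_{-i})$ for all $c\notin\{a,b\}$. Since $a$ and $b$ are consecutively ranked in an $(a,b)$-swap, this is exactly the conjunction of \emph{upper invariance} and \emph{lower invariance} of \citet{Me21}; in particular, along such a swap only agent $i$'s shares of $a$ and $b$ move, with $Q_{ia}+Q_{ib}$ unchanged.

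Second, I would promote these invariances to elementary monotonicity using ordinal efficiency. Fix $i$, $P_{-i}$, and an $(a,b)$-swap in which $b$ is lifted in $P'_i$, and suppose elementary monotonicity failed, i.e.\ $Q_{ib}(P'_i,P_{-i})<Q_{ib}(P_i,P_{-i})$. By the previous step, $Q(P'_i,P_{-i})$ read at the profile $(P_i,P_{-i})$ gives agent $i$ strictly more of $a$, strictly less of $b$, and the same of every other object, so $Q_i(P'_i,P_{-i})\succ_{P_i}Q_i(P_i,P_{-i})$, and symmetrically $Q_i(P_i,P_{-i})\succ_{P'_i}Q_i(P'_i,P_{-i})$. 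It remains to turn one of these single-agent improvements into an ordinal domination of the whole assignment the mechanism outputs there. For this I would use that ordinal efficiency of the full assignment implies ordinal efficiency of the sub-assignment for $N\setminus\{i\}$ once agent $i$'s bundle is subtracted from the supply, and that the two sub-assignments for $N\setminus\{i\}$ face the common profile $P_{-i}$ with supplies differing only by a transfer of mass from $a$ to $b$; the corresponding comparative-statics property of ordinally efficient assignments is what underlies the analyses of \citet{Me21} and \citet{Mennle17}. Granting this, $Q$ satisfies elementary monotonicity.

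Third, conclude: $Q$ is LROBIC and elementary monotone, hence strategy-proof by Theorem \ref{theo:lobic}; but \citet{Bogo01} rule out a strategy-proof, ordinally efficient mechanism satisfying equal treatment of equals when $n\ge 4$, a contradiction. The main obstacle is the second step — extracting elementary monotonicity from local robustness plus ordinal efficiency — because ordinal efficiency is a joint restriction on all agents' shares, so a swap argument for agent $i$ alone does not by itself produce a Pareto-improving assignment; it is the invariance consequences of LROBIC that render the other agents' shares comparable across the two profiles. If one prefers, this step together with the third can be replaced by a direct appeal to the strengthened impossibility of \citet{Mennle17}, with Theorem \ref{theo:lobic} then entering only through Lemma \ref{lem:lr}.
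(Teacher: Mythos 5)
Your closing sentence --- drop the detour through elementary monotonicity, apply Lemma \ref{lem:lr} to get the ex-post versions of upper and lower invariance, and then invoke the strengthened impossibility of \citet{Mennle17} --- is exactly the paper's proof, and it is the only part of your proposal that actually closes the argument. The paper even remarks, immediately after the corollary, that the result ``does not use elementary monotonicity, and hence, cannot be directly inferred from Theorem \ref{theo:lobic}''; so the authors themselves do not see a way to run your primary route.

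The gap in that primary route is the second step. From Lemma \ref{lem:lr} you correctly get that along an $(a,b)$-swap only agent $i$'s shares of $a$ and $b$ move, with their sum preserved, and that a failure of elementary monotonicity would mean $Q_i(P_i',P_{-i})$ strictly FOSD-dominates $Q_i(P_i,P_{-i})$ at $P_i$. But that is a violation of \emph{strategy-proofness}, not of ordinal efficiency: ordinal efficiency is a within-profile condition, comparing $Q(\mathbf{P})$ to other feasible assignments at the \emph{same} profile, whereas your improvement compares agent $i$'s bundles across two different profiles. Lemma \ref{lem:lr} controls only row $i$ of the two bistochastic matrices; the other agents' rows can be rearranged arbitrarily between $(P_i,P_{-i})$ and $(P_i',P_{-i})$, so $Q(P_i',P_{-i})$ need not Pareto-dominate $Q(P_i,P_{-i})$ at $(P_i,P_{-i})$, and the ``comparative-statics property of ordinally efficient assignments'' you appeal to is neither stated precisely nor proved --- nor is it what \citet{Mennle17} actually do. Their contribution is a direct adaptation of the profile-by-profile counterexample construction of \citet{Bogo01}, replacing strategy-proofness by ex-post upper and lower invariance; they do not derive elementary monotonicity from ordinal efficiency. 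So you should discard steps two and three and present the fallback as the proof.
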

\begin{proof}
By Lemma \ref{lem:lr}, a locally robust OBIC mechanism satisfies {\em ex-post}
versions of upper invariance and lower invariance.
\citet{Mennle17} show that the proof in \citet{Bogo01} can be adapted by replacing
strategy-proofness with ex-post versions of upper invariance and lower invariance. Hence, these two properties are incompatible with ordinal efficiency
and equal treatment of equals for $n \ge 4$, and we are done.
\end{proof}

Note that Corollary \ref{cor:imp} does not use elementary monotonicity, and hence, cannot be directly inferred
from Theorem \ref{theo:lobic}.

%\bibliographystyle{ecta}
%\bibliography{order}
\end{document}